\documentclass[a4paper,10pt]{article}
\usepackage{QED}
\usepackage{amsmath}
\usepackage{amssymb}
\usepackage{a4}

\def\b{\beta}
\def\g{\gamma}

\def\G{\Gamma}

\def\t{\tau}

\def\d{\delta}

\def\l{\lambda}

\def\s{\sigma}

\def\f{\rightarrow}
\def\tr{\triangleright}

\def\v{\vdash}

\def\et{\wedge}
\def\<{\langle}
\def\>{\rangle}
\def\F{\displaystyle\frac}
\newtheorem{theorem}{Theorem}[section]
\newtheorem{lemma}{Lemma}[section]

\newtheorem{corollary}{Corollary}[section]

\newtheorem{definition}{Definition}[section]
\newtheorem{notation}{Notation}[section]

\begin{document}

\title{A short proof that adding some permutation rules to $\b$ preserves $SN$}

\author{Ren\'e David \\
LAMA - Equipe LIMD -
 Universit\'e de Chamb\'ery\\
e-mail : rene.david@univ-savoie.fr }


\maketitle

\begin{abstract}
I show that, if a term is $SN$ for $\beta$,  it remains $SN$ when
some permutation rules  are added.
\end{abstract}

\section{Introduction}

Strong normalization  (abbreviated as $SN$) is a property of
rewriting systems that is often desired. Since about 10 years many
researchers have considered the following question : If a
$\l$-term is $SN$ for the $\b$-reduction, does it remain $SN$ if
some other reduction rules are added ?  They are mainly interested
with permutation rules they introduce to be able to delay some
$\b$-reductions in, for example, {\it let $x$ = ... in ...}
constructions or in {\it calculi with explicit substitutions}.
Here are some papers considering such permutations rules: L.
Regnier \cite{regnier}, F. Kamareddine \cite{fairouz}, E. Moggi
\cite{moggi}, R. Dyckhoff and S. Lengrand \cite{lengrand}, A. J.
Kfoury  and J. B. Wells \cite{wells}, Y. Ohta and M. Hasegawa
\cite{ohta}, J. Esp\'irito Santo \cite{jose-1},  \cite{jose-2},
and \cite{jose-3}.

Some of these papers show that $SN$ is preserved by the addition
of the permutation rules they introduce but, most often, authors
do not consider  the whole set of rules or add   restrictions to
some rules. For example the rule  $( M \ (\lambda x. N \ P ) )
\triangleright
  (\lambda x. (M \ N) \ P )$ is often restricted to the case when
  $M$ is an abstraction (in this case it is usually called
  ${assoc}$).

  I give here a simple and short proof that the permutations rules
  preserve $SN$ when they are added all together and with no
  restriction. It is done as follows. I show
  that every term which is typable in the system (often called system $\cal{D}$) of
  types built with $\rightarrow$ and $\et$ is strongly normalizing for all the rules
($\beta$ and the permutation rules). Since it is well known that a
term is $SN$ for the $\beta$-rule iff it is typable in this
system, the result follows. The proof is an extension of my proof
of $SN$ for the simply typed $\l$-calculus where the main result
is  a substitution theorem (here Theorem \ref{main}):  if $t$ and
$a$ are in $SN$, then so is $t[x:=a]$.

To my knowledge, only one other paper (\cite{jose-2} and
  its recent version \cite{jose-3}) considers all the rules with
  no restriction. The technic used there is completely different
  from the one used in this paper.

\section{Definitions and notations}

\begin{definition}\label{def}
\begin{itemize}
  \item The set of $\l$-terms is defined by the following grammar

$${\cal M} := x\ | \ \l x.  {\cal M} \ | \ ({\cal M} \; {\cal M})$$

  \item  The set ${\cal T}$ of types is defined (simultaneously with the set ${\cal S}$ of simple types) by the following
  grammars where  ${\cal A}$ is a set of atomic constants

$${\cal S} ::= \; {\cal A} \; \mid {\cal T} \f {\cal S}$$
$${\cal T} ::= \; {\cal S} \; \mid {\cal S} \et {\cal T}$$

\item The typing rules are the following where $\G$ is a set of declarations as $x:A$
where $x$ is a variable and the mentioned types ($A, B$) are in
$\cal{T}$:

\begin{center}

$\F{}{\G , x : A \v x : A} $

\vspace{.5cm}

$\F{\G \v M : A \f B \quad \G \v N : A} {\G \v (M \; N) : B }$
\hspace{0.5cm} $\F{\G, x: A \v M : B} {\G \v \l x.M : A \f B}$

\vspace{.5cm}

$\F{\G \v M : A \et B} {\G \v M : A }$ \hspace{0.5cm} $\F{\G \v M
: A \et B} {\G \v M : B }$

\vspace{.5cm}

$\F{\G \v M : A  \quad \G \v M : B} {\G \v M : A \et B }$

\end{center}
\end{itemize}
\end{definition}

\noindent {\bf Remarks and Notation}

\begin{enumerate}
  \item To avoid too many brackets in the lambda terms I will adopt the following
  conventions.
 An application (or a sequence of applications) is always surrounded by brackets (i.e. the application of $M$ to $N$
 is written $(M \ N )$ with a blank between $M$ and $N$) and, as usual,
   application associates to the left i.e. $(M \ N \
  P)$
  means $((M \ N) \ P)$.
An abstraction is always written as $\lambda x.M$ (i.e. there is a
dot after the variable but no blank between the dot and $M$) where
either $M$ is a letter or an application (and thus between
brackets) or another abstraction.

For example $\lambda y.(M N)$ represents  an abstraction and
$(\lambda y.M N)$ a redex.

  \item Note that in the usual definition of the types with intersection
$\rightarrow$ and $\et$ can be used with no restriction. Here we
forbid to have an $\et$ at the right of an $\rightarrow$. For
example $A \f (B \et C)$ is forbidden and must be replaced by $(A
\f B) \et (A \f C)$. It is well known that both systems are
equivalent since it is easily proved that any type derivation in
the unrestricted system can be transformed into a type derivation
in the restricted one. Actually note that, in fact, the type
derivation given by Theorem \ref{D} already satisfies this
restriction.

We have used this restricted version to make simpler the analysis
of type derivations in the proof of Theorem \ref{main}
  \item Also note (this is well known and easy to prove) that any type
derivation can be transformed into a normal derivation i.e. a
derivation in which the introduction of an $\et$ is never
immediately followed by its elimination.
\item The lemmas and theorems using types will be indicated by the mention \textquotedblleft typed".
If a type derivation is given to  $M$,  $type(M)$ will denote the
size (i.e the number of symbols) of the  type of $M$.
\end{enumerate}

\begin{definition}
The reduction rules are the following.
\begin{itemize}
  \item $\beta$ : $(\lambda x. M \ N) \triangleright M[x:=N]$
  \item $\d$ : $(\l y.\l x. M \ N) \triangleright \l x.(\l y. M \
  N)$
\item $\g$ : $(\lambda x. M \ N \ P) \triangleright (\lambda x. (M \ P) \
N)$
   \item ${assoc}$  : $( M \ (\lambda x. N \ P ) ) \triangleright
  (\lambda x. (M \ N) \ P )$
\end{itemize}
\end{definition}

 Using
Barendregt's convention for the names of variables, we  assume
that, in $\g$ (resp. $\d$, $assoc$), $x$ is not free in $P$ (resp.
in $N$,  in $M$).

The rules $\d$ and $\g$ have been introduced by Regnier in
\cite{regnier} and are called there the $\s$-reduction. It seems
that the first formulation of {\em assoc} appears in Moggi
\cite{moggi} in the restricted case where  $M$ is an abstraction
and in a  ``{\em let ... in ...}'' formulation.

Note that $\g$ (resp. $\d$, $assoc$) are called $\theta_1$ (resp.
$ \g$, $\theta_3$) in \cite{wells} and $\pi_1$ or $\s_1$ (resp.
$\s_2$, $\pi_2$) in \cite{jose-3}.

\begin{notation}
\begin{itemize}

  \item If $t$ is a term, $size(t)$ denotes its size.
  \item If $t \in SN$ (i.e. every sequence of reductions starting from $t$
is finite), $\eta(t)$  denotes
   the length of the
longest reduction of $t$. Since various notions of reductions are
considered in this paper, by default these concepts  are relative
to the union of all four reduction rules. When this is not the
case (e.g. $SN$ wrt to $\beta$), then the reduction rule intended
is indicated explicitly.

\item Let $\s$ be a substitution. We say that $\s$ is fair if the  $\s(x)$ for $x
\in dom(\s)$ all have the same type (that will be denoted as
$type(\s)$). We say that $\s \in SN$  if, for each $x \in
dom(\s)$, $\s(x) \in SN$.
\item Let $\s \in SN$ be a substitution and $t$ be a term. We denote by $size(\s,t)$ (resp. $\eta(\s,t)$) the
sum, over $x \in dom(\s)$, of $nb(t,x). size(\s(x))$ (resp.
$nb(t,x). \eta(\s(x))$)
 where $nb(t,x)$ is the number of free occurrences of $x$
in $t$.
\item If $\overrightarrow{M}$ is a sequence of terms, $lg(\overrightarrow{M})$ denotes its length,
$M(i)$  denotes the $i$-th element of the sequence and
$tail(\overrightarrow{M})$  denotes $\overrightarrow{M}$ from
which the first element has been deleted.
 \item Assume $t=(H \  \overrightarrow{M})$ where $H$ is an abstraction or a variable and $lg(\overrightarrow{M})\geq
 1$.

\begin{itemize}

\item If $H$ is an abstraction (in this case we say that $t$ is $\b$-head reducible), then $M(1)$ will
be denoted as $Arg[t]$ and
 $(R' \ tail(\overrightarrow{M}))$ will
be denoted by $B[t]$ where $R'$ is the reduct of the $\beta$-redex
$(H  \ Arg[t])$.
\item  If $H=\l x. N$ and $lg( \overrightarrow{M}) \geq 2$ (in this case we say that
$t$ is $\g$-head reducible), then $(\l x. (N \ M(2))$  $ M(1) \
M(3) \ ... \ M(lg(\overrightarrow{M})))$ will be denoted by
$C[t]$.
\item If $H=\l x. \l y. N$  (in this case we say that
$t$ is $\d$-head reducible), then $(\l y. (\l x. N \ M(1)) \ M(2)
\ ... \ M(lg(\overrightarrow{M})))$ will be denoted by $D[t]$.
\item If $M(i)=(\lambda x. N \ P)$, then the term
$(\lambda x. (H \ M(1) \ ... \ M(i-1) \ N ) \ P \ M(i+1) \ ... \
M(lg(\overrightarrow{M})))$ will be denoted by $A[t, i]$ and we
say that $M(i)$ is the $\b$-redex put in head position.

\end{itemize}
\item Finally, in a proof by induction, IH will denote the induction
hypothesis.

\end{itemize}

\end{notation}

\section{The theorem}

\begin{theorem}
Let $t$ be a term. Assume $t$ is strongly normalizing for $\b$.
Then $t$ is strongly normalizing for $\b$, $\d$, $\g$ and $assoc$.
\end{theorem}
\begin{proof}
This follows immediately from Theorem \ref{D} and corollary
\ref{cor} below.
\end{proof}

\begin{theorem}\label{D}
A term is $SN$ for the $\b$-rule iff it is typable in  system
$\cal{D}$.
\end{theorem}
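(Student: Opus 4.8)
The statement is the classical characterization of $\b$-strong normalization by intersection types, so I would prove the two implications separately. The substantive direction is ``$t$ is $SN$ for $\b$ $\Rightarrow$ $t$ is typable in ${\cal D}$'', which I would establish by induction on the pair $\<\eta_\b(t), size(t)\>$ ordered lexicographically, where $\eta_\b(t)$ is the length of the longest $\b$-reduction of $t$ (finite precisely because $t$ is $\b$-$SN$). The converse ``typable $\Rightarrow$ $SN$ for $\b$'' is the well-known soundness direction; it also follows a fortiori from Corollary \ref{cor} (every ${\cal D}$-typable term is $SN$ for all four rules, in particular for $\b$ alone), so I would invoke the standard result or that corollary rather than redo a reducibility argument.

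For the main direction I would distinguish three cases according to the head shape of $t$. If $t = \l x. u$, then $u$ is $\b$-$SN$ with $\eta_\b(u) \leq \eta_\b(t)$ and strictly smaller size, hence lexicographically smaller, so the IH gives a typing of $u$; because the system forbids an $\et$ to the right of an $\f$, an intersection type of $u$ must first be split by $\et$-elimination into its simple conjuncts $S_1, \dots, S_m$, each turned into an arrow $A \f S_j$ by $\f$-introduction, and the results recombined by $\et$-introduction, yielding a typing of $\l x. u$ (taking for $A$ the type assigned to $x$ in the context, or an arbitrary atom if $x \notin FV(u)$). If $t = (x \; u_1 \; \dots \; u_n)$ has a variable head, each $u_i$ is a proper subterm, hence $\b$-$SN$ and lexicographically smaller, so the IH types each $u_i$ with some $A_i$; merging the contexts by intersecting the types assigned to shared variables and declaring $x : A_1 \f (A_2 \f \dots \f (A_n \f \a))$ for a fresh atom $\a$ gives $t : \a$, and this arrow type indeed lies in ${\cal S}$ since its final codomain is atomic.

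The remaining case, $t = (\l x. u \; v \; u_1 \; \dots \; u_n)$ with a head redex, is where the real work lies. Its head reduct $t' = (u[x:=v] \; u_1 \; \dots \; u_n)$ satisfies $\eta_\b(t') < \eta_\b(t)$, so by IH $t'$ is typable, and the argument $v$ is a proper subterm, hence $\b$-$SN$ with smaller size and so also lexicographically smaller, which gives a typing of $v$ as well. I would then perform subject expansion: in a normal derivation of $t'$ (normal in the sense of Remark, no $\et$-introduction immediately followed by its elimination), collect the types $B_1, \dots, B_k$ at which the occurrences of $v$ corresponding to the free occurrences of $x$ in $u$ are used, set $B = B_1 \et \dots \et B_k$, and reconstruct $\G, x : B \v (u \; u_1 \; \dots \; u_n)$ together with $\G \v v : B$, whence $\G \v t$ by one $\f$-introduction and one $\f$-elimination.

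The main obstacle, and the point that forces the intersection system to have no universal type, is the subcase where $x \notin FV(u)$: then $v$ is erased by the reduction and does not appear in the derivation of $t'$ at all, so no type $B$ can be read off from it. Here I would use precisely the independently obtained typing $\G' \v v : D$, available exactly because $v$ is $\b$-$SN$, declare $x : D$ in a position left unused by $u$, and proceed as before; this is where the strong normalization of the argument is genuinely consumed. Some care is also needed so that every intermediate type respects the restriction of Definition \ref{def} that no $\et$ occur to the right of an $\f$, but since every derivation can be brought into that restricted shape, this is routine bookkeeping rather than a real difficulty.
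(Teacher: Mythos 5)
Your proposal is correct and follows essentially the same route as the paper: induction on $\langle\eta(t),size(t)\rangle$ with the same three cases (abstraction, variable head, head redex), reading off the types of the occurrences of the argument in the typing of the reduct, and using the independent typing of the erased argument when $x\notin FV(u)$. The paper likewise delegates the converse direction to Corollary \ref{cor}; your extra remarks on the restricted type grammar are just bookkeeping the paper leaves implicit.
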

\begin{proof}
This is a classical result. For the sake of completeness I recall
here the proof of the only if direction given in  \cite{moi}. Note
that it is the only direction that is used in this paper and that
corollary \ref{cor} below actually gives the other direction. The
proof is by induction on $\langle\eta(t),size(t)\rangle$.

-  If $t=\lambda x\;u.$ This follows immediately from the IH.

- If $t=(x\;v_{1}\;...\;v_{n})$.
 By the IH, for every $j$, let $x:A_{j},\Gamma _{j}\vdash v_{j}:B_{j}$. Then $x:\bigwedge
 A_{j}\et
(B_{1},...,B_{n}\rightarrow C),$ $\bigwedge \Gamma _{j}\vdash t:C$
where $C$ is any type, for example any atomic type.

-  If $t=(\lambda x.a \;b\;\overrightarrow{c})$. By the IH,
$(a[x:=b]\;\overrightarrow{c})$ is typable.  If $x$ occurs in $a$,
let $A_{1}\;...\;A_{n}$ be the types of the occurrences of $b$ in
the typing of $(a[x:=b]\;\overrightarrow{c})$. Then $ t $ is
typable by giving to $x$ and $b$ the type $A_{1}\;\et ...\;\et
A_{n}$. Otherwise, by the induction hypothesis $b$ is typable of
type $B$ and then $t$ is typable by giving to $x$ the type $B$.
\end{proof}

{\it From now on, $\triangleright$ denotes the reduction by one of
the rules $\b$, $\d$, $\g$ and $assoc$}.

\begin{lemma}\label{prepa}
\begin{enumerate}
\item The system satisfies subject reduction i.e. if $\G \v t : A$ and $t \triangleright t'$ then
$\G \v t' : A$.

\item If $t \triangleright t'$ then $t[x:=u] \triangleright t'[x:=u]$.
  \item If   $t'=t[x:=u] \in SN$ then  $t \in SN$
  and $\eta(t) \leq \eta(t')$.
\end{enumerate}
\end{lemma}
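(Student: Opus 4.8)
The plan is to prove the three parts essentially independently, observing that part~3 is an immediate consequence of part~2. For part~1 (subject reduction) I would argue by case analysis on which of the four rules is contracted, reducing to the case where the redex is at the root, since the congruence (context) cases follow directly from the induction hypothesis and the fact that the typing rules build the type of a term compositionally from the types of its immediate subterms. For $\b$ the statement is the usual typing substitution lemma: from a derivation of $\G \v (\l x.M \ N) : A$ I extract sub-derivations $\G, x:B \v M : A$ and $\G \v N : B$ and combine them into a derivation of $\G \v M[x:=N] : A$. For each permutation rule ($\d$, $\g$, $assoc$) I would take a derivation of $\G \v t : A$, read off the sub-derivations of the components, and reassemble them into a derivation of $\G \v t' : A$.

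The structural facts that make this reassembly routine are twofold. First, the freshness side-conditions imposed by Barendregt's convention — for instance $x \notin FV(P)$ in $\g$, $x \notin FV(N)$ in $\d$, $x \notin FV(M)$ in $assoc$ — allow me to move the relevant sub-derivation across a binder by weakening, so the type given to a component on the left-hand side can be reused verbatim on the right-hand side. Second, the restriction forbidding $\et$ to the right of an $\f$, together with the use of normal derivations, guarantees that an abstraction occurring in function position is typed directly by the $\f$-introduction rule rather than through an $\et$-elimination (any intersection type on an abstraction can only have been produced by $\et$-introduction, which in a normal derivation cannot be immediately eliminated). Thus the arrow type needed for the enclosing application is always available without a hidden case split. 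I expect this part to be the main obstacle, though the difficulty is bookkeeping rather than conceptual: each permutation rule needs its own diagram chase, and the intersection types force me to track conjunctions of types carried by the arguments.

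Part~2 is a routine commutation statement, proved by induction on $t$. The redex contracted in $t \tr t'$ survives the substitution $[x:=u]$: its head symbol is unaffected, and by Barendregt's convention $x$ is distinct from the bound variables of the redex, so the same rule is applicable to $t[x:=u]$ at the corresponding position. The only point needing care is the $\b$ case, where I must check that pushing $[x:=u]$ through the contractum agrees with performing the inner substitution first; this is exactly the substitution-composition identity $M[y:=N][x:=u] = M[x:=u][y:=N[x:=u]]$, valid under Barendregt's convention when $x \neq y$ and $y \notin FV(u)$.

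Part~3 then follows at once from part~2. Given any reduction sequence $t = t_0 \tr t_1 \tr \cdots \tr t_n$, applying $[x:=u]$ and part~2 term by term yields a reduction sequence $t' = t_0[x:=u] \tr t_1[x:=u] \tr \cdots \tr t_n[x:=u]$ of the same length $n$ starting from $t' = t[x:=u] \in SN$. Hence $n \leq \eta(t')$ for every reduction of $t$, so in particular $t$ admits no infinite reduction, i.e. $t \in SN$, and the bound on all its reductions gives $\eta(t) \leq \eta(t')$.
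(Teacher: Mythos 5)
Your proposal is correct and fills in exactly the routine verifications that the paper dismisses with the single word ``Immediate'': the substitution lemma and derivation-reassembly for subject reduction, the commutation of one-step reduction with substitution, and the transfer of reduction sequences for part~3. Since the paper offers no proof of its own, your elaboration is the standard argument the author intends, and I see no gap in it.
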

\begin{proof}
Immediate.
\end{proof}

\begin{lemma}\label{cs-sn}
Let $t=(H \  \overrightarrow{M})$ be such that $H$ is an
abstraction or a variable and $lg(\overrightarrow{M})\geq 1$.
Assume $H, \overrightarrow{M} \in SN$ and that
\begin{enumerate}
\item If $t$ is $\d$-head reducible (resp. $\g$-head reducible, $\b$-head reducible), then $D[t] \in SN$
(resp. $C[t] \in SN$, $Arg[t], B[t] \in SN$).

\item For each $i$ such that $M(i)$ is a $\b$-redex, $A[t,i] \in SN$,

\end{enumerate}
Then $t \in SN$.
\end{lemma}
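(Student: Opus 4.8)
The plan is to use the standard characterization that a term is in $SN$ precisely when all of its one-step reducts are, and to run a strong induction on the measure $\mu(t) = \eta(H) + \sum_{i=1}^{lg(\overrightarrow{M})} \eta(M(i))$, which is a well-defined natural number since $H$ and all the $M(i)$ are assumed to be in $SN$. So I would fix $t=(H\ \overrightarrow{M})$ satisfying the hypotheses and show that every $t'$ with $t \triangleright t'$ is in $SN$; this gives $t \in SN$.

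First I would classify the one-step reducts of $t$. Because $H$ is an abstraction or a variable and the four rules only fire on a redex whose outermost constructor is an application, the redexes of $t$ fall into exactly three kinds: (a) a reduction taking place strictly inside $H$ or strictly inside some $M(i)$; (b) a head reduction, which according to the shape of $H$ produces precisely one of $B[t]$, $C[t]$, $D[t]$; or (c) an $assoc$ step lifting an argument $M(i)$ that is itself a $\b$-redex, producing $A[t,i]$. The reducts of kinds (b) and (c) are in $SN$ immediately by hypotheses 1 and 2, so only the internal reducts of kind (a) require work.

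For an internal reduct $t'$ (say $H \triangleright H'$, or $M(i) \triangleright M(i)'$) I would apply the induction hypothesis. Its measure is strictly smaller, since exactly one summand of $\mu$ decreases; its immediate subterms are reducts of the $SN$ subterms of $t$, hence in $SN$; so it remains only to check that $t'$ still satisfies hypotheses 1 and 2. For every head or $assoc$ obligation of $t'$ whose redex occupies the \emph{same position} as one already present in $t$, this is pure projection: using that $\triangleright$ is compatible with substitution into the substituted argument together with Lemma \ref{prepa}(2), one verifies $B[t] \triangleright^* B[t']$, $C[t] \triangleright^* C[t']$, $D[t] \triangleright^* D[t']$ and $A[t,j] \triangleright^* A[t',j]$, so each target is a reduct of an $SN$ term and is therefore in $SN$.

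The delicate point --- and the step I expect to be the real obstacle --- is the appearance of a brand new $assoc$-redex under an internal reduction: an argument $M(i) = (Q\ P)$ that is \emph{not} a $\b$-redex in $t$ may reduce, via $Q \triangleright^* \l x. N'$, to the $\b$-redex $M(i)' = (\l x. N'\ P)$, so that $t'$ now carries the obligation $A[t',i] \in SN$, which projects from no hypothesis of $t$. To discharge it I would exploit the arithmetic underlying the whole induction: since $Q$ is a subterm of $M(i)$ and $N'$ is a proper reduct of $Q$, one has $\eta(N') < \eta(M(i))$, so the auxiliary configuration $(H\ M(1)\ ...\ M(i-1)\ N')$ has measure strictly below $\mu(t)$; likewise the head $\b$-reduct of $A[t',i]$, which by Barendregt's convention ($x$ not free in the head part) collapses to the internal reduct $(H\ M(1)\ ...\ M(i-1)\ N'[x:=P]\ M(i+1)\ ...\ )$ of $t$, has measure strictly below $\mu(t)$. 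Thus every term produced along $A[t',i]$ is again a configuration of strictly smaller $\mu$ which, by the same projection arguments, satisfies hypotheses 1 and 2, so the induction hypothesis applies to it and yields $A[t',i] \in SN$. Once this created-redex case is settled, $t'$ satisfies both hypotheses at strictly smaller measure, the induction hypothesis gives $t' \in SN$, and the proof closes.
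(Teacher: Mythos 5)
Your overall strategy coincides with the paper's: the proof given there is exactly the one\-/line sketch ``induction on $\eta(H)+\sum\eta(M(i))$; show every reduct of $t$ is in $SN$'', and you are right that the reducts not handled directly by hypotheses 1 and 2 are the internal ones, the crux being whether an internal reduct still satisfies those hypotheses. But your elaboration has two genuine gaps. First, you treat only one of the ways a new obligation can be created. Besides an argument turning into a $\beta$-redex, an internal step inside the head can create $\delta$-head reducibility: if $H=\lambda x.P$ with $P$ an application and $P\triangleright\lambda y.N'$, then $t'$ is $\delta$-head reducible although $t$ was not, and $D[t']=(\lambda y.(\lambda x.N'\ M(1))\ M(2)\ \dots\ M(n))$ is not a reduct of $B[t]$, of $C[t]$, or of anything else hypothesized; your case analysis never discharges it. A smaller instance of the same oversight: when the internal step is an $assoc$ step at the root of an argument $M(j)$ that already was a $\beta$-redex, $M(j)'$ is again a $\beta$-redex, but $A[t',j]$ is not a reduct of $A[t,j]$ (the two only have a common reduct), so the ``pure projection'' claim fails there as well.

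Second, your discharge of the created-$assoc$ case does not go through. To apply the induction hypothesis to $A[t',i]=(\lambda x.(H\ M(1)\ \dots\ M(i-1)\ N')\ P\ M(i+1)\ \dots\ M(n))$ you must (a) exhibit it as a configuration $(G\ \overrightarrow{L})$ of measure strictly below $\mu(t)$ and (b) verify hypotheses 1 and 2 for that configuration. Neither holds as you argue it. The head is the compound term $\lambda x.(H\ M(1)\ \dots\ N')$, and $\eta$ of a compound is not bounded by the sum of the $\eta$'s of its pieces (consider $(\lambda x.(x\ x\ x)\ \lambda y.y)$, all of whose components have $\eta=0$), so ``strictly smaller $\mu$'' is unjustified. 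And the obligations of that configuration include not only its head $\beta$-reduct (which, as you note, collapses onto an internal reduct of $t$) but also, for instance, $C[A[t',i]]$ when $i<n$ and $A[A[t',i],1]$ when $P$ is itself a $\beta$-redex, none of which you relate to anything known to be in $SN$; knowing that one particular reduct of $A[t',i]$ is in $SN$ does not make $A[t',i]$ itself $SN$. So the delicate point you correctly isolated remains open in your write-up: it needs either a strengthening of the statement that closes the hypotheses under created head- and argument-redexes, or a different bookkeeping of what is known to be $SN$ along a hypothetical infinite reduction.
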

\begin{proof}
By induction on $\eta(H) + \sum \eta(M(i))$. Show that each reduct
of $t$ is in $SN$. Note that the assumption $H,\overrightarrow{M}
\in SN$ is implied by the others if at least one of them is not
\textquotedblleft empty" i.e. if $t$ is head reducible for at
least one rule.
\end{proof}

\begin{lemma}[typed]\label{prepa2}
If $(t \ \overrightarrow{u}) \in SN$ then $(\lambda x. t \ x \
\overrightarrow{u}) \in  SN$.
\end{lemma}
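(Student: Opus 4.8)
The plan is to apply Lemma \ref{cs-sn} to $s = (\l x.t\ x\ \overrightarrow{u})$, whose head $H=\l x.t$ is an abstraction and whose argument list is $M(1)=x$ and $M(i+1)=u_i$. Since $t[x:=x]=t$, the $\b$-head reduct is $B[s]=(t\ \overrightarrow{u})$, which is $SN$ by hypothesis, and $Arg[s]=x$ is trivially $SN$. Every immediate subterm of $s$ is a subterm of the $SN$ term $(t\ \overrightarrow{u})$, and as each reduction is closed under context, subterms of $SN$ terms are $SN$; hence $H,\overrightarrow{M}\in SN$. By Lemma \ref{cs-sn} it then remains only to show that the permutation reducts $C[s]$ (when $lg(\overrightarrow{M})\geq 2$), $D[s]$ (when $t$ is itself an abstraction) and each $A[s,i]$ (for an argument $u_{i-1}$ that is a $\b$-redex) are $SN$.

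The difficulty is that none of these three reducts contracts the head redex $(\l x.t\ x)$; each merely relocates it, still carrying a single distinguished sub-redex of the form $(\l x.E\ x)$ at the head of a spine, whose contraction returns $(t\ \overrightarrow{u})$ (for $C[s]$ and $D[s]$) or a one-step reduct of $(t\ \overrightarrow{u})$ (for $A[s,i]$). I would therefore not prove the statement in isolation but strengthen it to: if $W_0\in SN$ and $W$ is obtained from $W_0$ by replacing a subterm $E$ sitting in head-of-spine position by the redex $(\l x.E\ x)$, then $W\in SN$; our lemma is the case $W_0=(t\ \overrightarrow{u})$. I would prove this by induction on $\langle \eta(W_0),\mu(W)\rangle$ ordered lexicographically, where $\mu$ is a termination measure for the permutation rules $\d,\g,assoc$ alone. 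Such a $\mu$ can be supplied because these three rules are size-preserving, so a suitable structural weight bounds the length of any permutation-only reduction.

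In the induction I would again apply Lemma \ref{cs-sn}, this time to $W$, and classify its reducts into three kinds. Contracting the distinguished redex yields $W_0$, which is $SN$. A step that is neither this contraction nor a relocation of the distinguished redex projects, upon contracting that redex, to a genuine reduction $W_0\triangleright W_0'$; then $\eta(W_0')<\eta(W_0)$, the reduct is again $W_0'$ carrying a distinguished head redex, and the first component of the IH applies. Finally, a permutation step that moves the distinguished redex leaves $W_0$ unchanged but strictly decreases $\mu(W)$, so the second component applies. Among the outstanding reducts of $s$, the step $A[s,i]$ is of the second kind, while $C[s]$ and $D[s]$ are of the third.

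I expect this last kind to be the main obstacle: the rules $\d$, $\g$ and $assoc$ can, in some configurations, shuffle the distinguished redex without performing any reduction in $W_0$, so $\eta(W_0)$ by itself cannot decrease. Making this work requires both a definition of \textquotedblleft head-of-spine position\textquotedblright\ that is preserved by all four reduction rules and an explicit measure $\mu$ witnessing that the permutation relocations terminate; checking the projection of an arbitrary step of $W$ onto a step of $W_0$ is then the remaining, routine but delicate, bookkeeping.
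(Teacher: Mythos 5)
Your generalization is exactly the paper's: it replaces the statement by ``if $L[t]\in SN$ then $L[(\l x.t\ x)]\in SN$'' for a left context $L$ (your ``head-of-spine position''), inducts, and observes that the only problematic reducts are the $\d$- and $\g$-steps that merely relocate the redex $(\l x.t\ x)$ without touching $L[t]$. Where you diverge is in how those relocations are paid for, and that is where your proposal has a genuine gap. You invoke a measure $\mu$ that decreases under every $\d,\g,assoc$ step and justify its existence by saying the rules are size-preserving; size preservation does not imply termination of a rewrite system, so as written this is not an argument, and you never exhibit $\mu$. You have in effect reduced the lemma to the termination of the permutation subsystem, which is a true but non-obvious fact that the paper nowhere proves and that your proof would have to establish.

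The paper closes this hole differently, and this is precisely why the lemma carries the tag ``typed'': the induction is on $\langle type(t),\eta(L[t])\rangle$, and both relocations strictly decrease the first component --- in the $\d$-case $t=\l y.t_1$ loses an arrow ($t_1$ has a smaller type than $t$), and in the $\g$-case $t$ is in function position in $L=L'[([]\ v)]$, so $(t\ v)$ again has a smaller type. Your untyped route can in fact be salvaged without appealing to global permutation termination: only the relocations need a decreasing weight, and one can check that a $\d$-relocation leaves the hole directly under an abstraction node while a $\g$-relocation leaves it pointing at an application, so the two kinds cannot feed each other; the quantity (number of leading abstractions of $t$) $+$ (number of application nodes of $L$ sitting consecutively just above the hole) strictly decreases under both. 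Either you supply such a local measure explicitly, or you use the type as the paper does; as submitted, the decisive step is asserted rather than proved.
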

\begin{proof}
Note that, if $(\lambda x. t \ x \ \overrightarrow{u})$ has a head
redex for the $\d$-rule, its reduct has not the desirable shape
and an induction hypothesis will not be applicable. We thus
generalize a bit the statement with the notion of left context,
i.e. a context with exactly one hole on the left branch. More
precisely the set $\cal{L}$ of left contexts is defined by the
following grammar: $ \cal{L}$ $ := [] \ | \ \l  x. \cal{L} \ | \
(\cal{L} \ \cal{M})$. The result is thus a special case of the
following claim.

\noindent {\em Claim} : Let $L$ be a left context and $t$ be a
term. If $L[t]$ is in $SN$ then so is $w=L[(\lambda x. t \ x)]$.

\noindent {\em Proof} : By induction on $\langle type(t),
\eta(L[t])\rangle$. We show that every reduct of $w$ is in $SN$.
There are 4 possibilities for the reduced redex. If it is
 in $L$ or in $t$, the result follows
immediately from the IH. If it is the $(\lambda x. t \ x)$
substituted in the hole of $L$ the result is clear. The last
situation is when the redex is created by the substitution in the
hole of $L$. These cases are given below. Note that the {\em
assoc} and $\b$ rules can only be used either in $t$ or in $L$.

- $t= \l y.t_1$ and $w \tr_{\d} L[\l y. (\l x.t_1 \ x)]= L'[(\l
x.t_1 \ x)]$ where $L'=L[\l y. []]$. The result follows from the
IH applied to $L'$ and $t_1$ (since $t_1$ can be given a type less
than the one of $t$).

- $L=L'[([] \ v)]$ and  $w \tr_{\g} L'[(\l x.(t \ v) \ x)]$. The
result follows from the IH applied to $L'$ and $t_1=(t \ v)$
(since $t_1$ can be given a type less than the one of $t$).
\end{proof}

\begin{theorem}[typed]\label{main}
Let $t\in SN$ and $\s \in SN$ be a fair substitution. Then $\s(t)
\in SN$.
\end{theorem}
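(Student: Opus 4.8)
The plan is to prove the statement by a main induction on $type(\s)$ and, for a fixed value of $type(\s)$, by a secondary induction on the triple $\langle \eta(t),\ \eta(\s,t),\ size(t)\rangle$ ordered lexicographically. The intuition behind this measure is as follows. A redex that is \emph{created} by the substitution — one that appears only because a head variable has been replaced by an abstraction — can, when contracted, only substitute for a variable whose type is the domain of $type(\s)$; firing it therefore strictly decreases $type(\s)$, which dominates everything else. A reduction that $\s(t)$ merely \emph{inherits} from $t$, on the other hand, strictly decreases $\eta(t)$, and this is why $\eta(t)$ is placed above $\eta(\s,t)$: an inherited $\b$-step may duplicate an argument and hence increase $\eta(\s,t)$, but the drop of $\eta(t)$ outranks it.

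I would first dispose of the easy cases on the shape of $t$. If $t$ is a variable then $\s(t)$ is either a variable or some $\s(x)$, hence in $SN$ by hypothesis; if $t=\l y.t_1$ then $\s(t)=\l y.\s(t_1)$ and the IH applied to $t_1$ suffices, its $size$ being strictly smaller while the other components do not increase. The core is the case $t=(H\ \overrightarrow{M})$ with $lg(\overrightarrow{M})\geq 1$, which I would settle by applying Lemma \ref{cs-sn} to $\s(t)$ written as $(K\ \overrightarrow{Q})$ with $K$ an abstraction or a variable: if $H$ is an abstraction, or a variable whose $\s$-image is not an application, then $K=\s(H)$; otherwise $H$ is a variable $y$ with $\s(y)=(G\ \overrightarrow{P})$ an application, and I take $K=G$ and $\overrightarrow{Q}=\overrightarrow{P}\cdot\overrightarrow{\s(M)}$. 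That $K$ and the $Q(j)$ are in $SN$ holds because they are, up to this decomposition of $\s(y)$, either images $\s(M(i))$ (in $SN$ by the secondary IH, each $M(i)$ being a strict subterm of $t$) or subterms of $\s(y)\in SN$. It then remains to certify that the reducts $D[\s(t)]$, $C[\s(t)]$, $B[\s(t)]$ and the $A[\s(t),i]$ required by Lemma \ref{cs-sn} are in $SN$.

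I would split these reducts into two families. The inherited ones come from a head redex already present in $t$: for instance $D[\s(t)]=\s(D[t])$ when $H$ is a double abstraction, and likewise for $C$, for $B$ when $H$ is an abstraction, and for $A[\s(t),i]$ when $M(i)$ is itself a $\b$-redex. Each such reduct is $\s$ applied to a one-step reduct $t'$ of $t$, so $\eta(t')<\eta(t)$ and the secondary IH applies. The created ones arise when a variable has become a head: either $H=y$ with $\s(y)$ an abstraction, producing a new head $\b$-redex governing $B[\s(t)]$ (and $C$, $D$); or an argument $M(i)=(y'\ \ldots)$ with $\s(y')$ an abstraction, producing a new $\b$-redex that $A[\s(t),i]$ drags to head position. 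For these I would invoke the main IH: the created redex, once contracted, substitutes a term whose type is the domain of $type(\s)$ and hence strictly smaller; after re-encoding the resulting application by a fresh variable carrying that smaller type, the main IH (together with the secondary IH, which furnishes the encoded body in $SN$) yields the claim. The auxiliary $\l x.\,t\ x$ patterns thrown up by $\d$ and $\g$ along the way are exactly what Lemma \ref{prepa2} is designed to absorb.

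The step I expect to be the main obstacle is the treatment of the created redexes, and within it the interaction with $assoc$. Unlike $\d$ and $\g$, the rule $assoc$ can lift a $\b$-redex buried in an argument all the way to the head, so the created redexes are not confined to the head of $t$ and the decomposition $(K\ \overrightarrow{Q})$ must be re-examined after each such move. The delicate point is to check that every term one is thereby led to is still an instance to which either the type-dropping main IH or Lemma \ref{prepa2} applies, and that the chosen measure genuinely decreases in each of these re-encodings.
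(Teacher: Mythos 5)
Your overall strategy is the paper's: a lexicographic induction dominated by the size of the substituted type, Lemma \ref{cs-sn} as the engine for certifying $\s(t)\in SN$ from its one-step reducts, the split of those reducts into inherited ones (handled by the drop of $\eta(t)$) and created ones (handled by re-encoding pieces of $\s(H)$ as fresh variables of strictly smaller type), and Lemma \ref{prepa2} to absorb the $(\l x. t\ x)$ patterns. But there is a concrete defect in the measure. The paper inducts on $\langle size(A),\eta(t),size(t),\eta(\s,t),size(\s,t)\rangle$; you propose $\langle type(\s),\eta(t),\eta(\s,t),size(t)\rangle$, i.e.\ you swap $size(t)$ with $\eta(\s,t)$ and drop $size(\s,t)$ entirely. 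Both changes break the created $\d$-redex case (the paper's case 1.c, where $H=\l x.z$ and $\s(z)=\l y.a$). There one must show $D[\s(t)]=\t(t')\in SN$ with $t'=(z'\ tail(\overrightarrow{M}))$ and $\t(z')=\l y.(\l x.a\ N(1))$: the type of $z'$ is still $A$, $\eta(t')$ is only $\leq\eta(t)$, and $\eta(\t,t')$ can \emph{exceed} $\eta(\s,t)$ because $\t(z')$ now contains a copy of $N(1)$ whose reduction length need not be accounted for in $\eta(\s,t)$ at all (e.g.\ when $M(1)$ contains no variable of $dom(\s)$); the paper gets through on $size(t')<size(t)$, which your ordering only consults after the possibly increased $\eta(\s,t)$. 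Worse, the residual obligation $(\l x. z''\ N(1))\in SN$, obtained by abstracting the body $a$ out as a fresh $z''$ of smaller type, is an instance where $\eta(t)$, $\eta(\s,t)$ and $size(t)$ can all be unchanged (take $lg(\overrightarrow{M})=1$ and $a$ a normal form): the only component that decreases there is $size(\s,t)$, which your measure does not have.

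Beyond the measure, the part you defer --- ``the treatment of the created redexes, and within it the interaction with $assoc$'' --- is not a residual verification but the body of the proof. It is where one must handle $\s(H)$ being itself a $\b$-redex, or an application headed by one (the paper's cases 2.c, 3.c, 4.c), and it contains arguments your sketch does not anticipate, e.g.\ in case 2.c the term $A[u,1]$ is shown to be in $SN$ not by any induction but by exhibiting it as a two-step $\g$-reduct of a term already known to be in $SN$. As it stands, your proposal correctly reconstructs the paper's architecture, but the induction it announces fails to decrease on at least one required sub-case, and the sub-cases that carry the actual difficulty are identified rather than proved.
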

\begin{proof}
Formally, what we prove is the following. Let $U = \{(t,\sigma,A)\
| \ t \in SN$, $\sigma \in SN$ and $A$ is assignable to each
$\sigma(x)\}$. Then, for all $(t,\sigma,A) \in U$, $\sigma(t) \in
SN$. Theorem  follows since, if $\sigma$ is fair, $(t,\sigma,A)
\in U$ for some $A$ .

 We assume all the derivations
are normal (see the remark after definition \ref{def}). The proof
is by induction on $\langle size(A),\eta(t),size(t), \eta(\s,t),
size(\s,t) \rangle$. We will have to use the induction hypothesis
to some
 $(t',\s', A')$ for which we have to give type derivations  and to show that the 5-uplet has
 decreased. For the types (since the verification is fastidious but easy) we give some details only for
 one example (the first time in case 1.c below) and, for the others, we simply say \textquotedblleft $type(t_1)< type(t_2)$"
 (resp. \textquotedblleft$type(t_1) = type(t_2)$\textquotedblright)
  instead of saying  something as
 \textquotedblleft $t_1$ can be given a type less than (resp. equal to) $type(t_2)$".

 Note that this theorem will be only used with unary substitutions
but its proof needs the general case because, starting with a
unary substitution, it may happen that we have to use the
induction hypothesis with a non unary substitution. It will be the
case, for example, in 1.c below.

 Let $(t,\sigma,A) \in U$. If $t$ is an abstraction or a variable the result is trivial. Thus
assume $t=(H \ \overrightarrow{M})$ where $H$ is an abstraction or
a variable and $n=lg(\overrightarrow{M})\geq 1$. Let
$\overrightarrow{N}=\s(\overrightarrow{M})$.

\noindent {\em Claim} : Let $\overrightarrow{P}$ be a (strict)
initial or
a final sub-sequence of $\overrightarrow{N}$. Then $(z \ \overrightarrow{P}) \in SN$. \\
{\em Proof} : Let $ \overrightarrow{Q}$ be the sub-sequence of
$\overrightarrow{M}$ corresponding to $\overrightarrow{P}$.  Then
$(z \ \overrightarrow{P})=\tau(t')$ where $t'=(z \
\overrightarrow{Q})$ and $\tau$ is the same as $\s$ for the
variables in $ \overrightarrow{Q}$ and $z \not\in dom(\tau)$. The
result follows from the IH since $size(t') < size(t)$.
\hspace{3cm} $\Box$

\medskip

We use Lemma \ref{cs-sn} to show that $\s(t) \in SN$.

\begin{enumerate}
\item Assume $\s(t)$ is $\d$-head reducible. We have to show that
$D[\s(t)] \in SN$. There are 3 cases to consider.
\begin{enumerate}
  \item If $t$ was already $\d$-head reducible, then
  $D[\s(t)]=\s(D[t])$ and the result follows from the IH.
  \item If $H$ is a variable and $\s(H)=\l x. \l y.a$, then
  $D[\s(t)]=t'[z:=\l y. (\l x.a \ N(1))]$ where $t'=(z\ tail(\overrightarrow{N}))$.
  By the claim, $t'\in SN$ and since $type(z) <
  size(A)$ it is enough, by the IH,  to check that $\l y. (\l x.a \ N(1))
  \in SN$. But this is $\l y. (z' \ N(1))[z':=\l x. a]$. But, by the claim, $(z' \ N(1))\in SN$ and we conclude by the IH since
  $type(z') < size(A)$.
  \item If $H=\l x. z$ and $\s(z)=\l y. a$, then $D[\s(t)]=(\l y. (\l x. a \ N(1)) \ tail(\overrightarrow{N}))
  =\t(t')$ where $t'=(z' \ tail(\overrightarrow{M}))$ and $\t$ is the same as $\s$ on the variables of
 $tail(\overrightarrow{M})$ and $\t(z')=\l y.(\l x. a \ N(1))$.
 Note that, by Lemma \ref{prepa},   $t'$ is in
 $SN$ and $\eta(t') \leq \eta(t)$. Since $size(t') < size(t)$ to get the result by the IH we
  have to show that
(1)  $(t', \tau, A) \in U$ and (2)  that $(\l x. a \ N(1))\in SN$.

 To prove (1) it is enough to show that we can give to $Q=\l y.(\l x. a \ M(1))$ the same type as
 $P=(\l x. \l y.a \ M(1))$. In the typing of $P$, $\l x. \l y.a$ has type $(A_1 \f B_1 \f C_1) \et ... \et (A_k \f B_k \f C_k)$
 and $M(1)$ has type $A_1 \et  ... \et A_k$ and thus $P$ has type
 $( B_1 \f C_1) \et ... \et ( B_k \f C_k)$. It follows that we can type $Q$ by typing $(\l x. a \ M(1))$ with type
$C_1 \et  ... \et C_k$ and thus $Q$ with type $( B_1 \f C_1) \et
... \et ( B_k \f C_k)$.

 To prove (2) we remark that $(\l x. a \ N(1))=(\l x. z'' \
  N(1))[z'':=a]$ and, since $type(a) < size(A)$ it is enough, by the IH,  to
  show that $u=(\l x. z'' \ N(1)) \in SN$. This is done as follows: $u=\s'(t'')$ where $t''= (\l x. z'' \ M(1))$ (which is, up to
  the renaming of $z$ into $z''$  a sub-term of $t$) and $\s'$ is as $\s$ but where $z''$ is not in the domain of  $\s'$
  whereas the occurrence of $z$ in $H$ was in the domain of $\s$. Thus, $size(\s',t'')<size(\s,t)$ and the result follows from the IH.
\end{enumerate}

\item Assume $\s(t)$ is $\g$-head reducible. We have to show that
$L[\s(t)] \in SN$. There are 4 cases to consider.

\begin{enumerate}
  \item If $H$ is an abstraction, then $C[\s(t)]=\s(C[t])$ and the result follows immediately from
the IH.
\item  $H$ is a variable and $\s(H)=\l y. a$, then
$C[\s(t)]=(\l y.(a\ N(2))\ N(1) \ N(3)$ $... \ N(n))= (\l y.(a\
N(2))\ y \ N(3)$ $ ... \ N(n))[y:=N(1)]$. Since $type(N(1)) <
size(A)$, it is enough, by the IH, to show $(\l y.(a\ N(2))\ y \
N(3)$ $... \ N(n)) \in SN$ and so, by Lemma \ref{prepa2}, that
$u=(a\ N(2) \ N(3) \ ... \ N(n)) \in SN$.  By the claim,  $(z \
tail(\overrightarrow{N})) \in SN$ and the result follows from the
IH since $u=(z \ tail(\overrightarrow{N}))[z:=a]$ and $type(a) <
size(A)$.
\item $H$ is a variable and $\s(H)=(\l y. a \ b)$,  then
$C[\s(t)]=(\l y.(a\ N(1))\ b$ $N(2) \ ... \ N(n))=(z \
tail(\overrightarrow{N}))[z:= (\l y.(a\ N(1))\ b)]$. Since
$type(z) < size(A)$, by the IH it is enough to show that $u=(\l
y.(a\ N(1))\ b) \in SN$. We use Lemma \ref{cs-sn}.

- We first have to show that $B[u]\in SN$. But this is $(a[y:=b] \
N(1))$ which is in $SN$ since $u_1=(a[y:=b] \ \overrightarrow{N})
\in SN$ since $u_1=\t(t_1)$ where  $t_1$ is the same as $t$ but
where we have given to the variable $H$  the fresh name $z$, $\t$
is the same as $\s$ for the variables in $dom(\s)$ and
$\t(z)=a[y:=b]$ and thus we may conclude by the IH since $\eta(\t,
t ) < \eta(\s, t)$.

- We then have to show that, if $b$ is a $\b$-redex say $(\l z.
b_1 \ b_2 )$, then $A[u,1]=(\l z. (\l y. a \ N(1) \ b_1) \ b_2)
\in SN$. Let $u_2=\t(t_2)$ where $t_2$ is the same as $t$ but
where we have given to the variable $H$ the fresh name $z$, $\t$
is the same as $\s$ for the variables in $dom(\s)$ and
$\t(z)=A[\s(H),1]$. By the IH, $u_2 \in SN$.  Note that that $t_2
\in SN$, $\eta(t_2) \leq \eta(t)$ by Lemma \ref{prepa} and that
$\eta(\tau,t2)<\eta(\sigma,t)$. But $u_2=(\l z. (\l y.a \ b_1) \
b_2 \ \overrightarrow{N})$ and thus $u_3=(\l z. (\l y.a \ b_1) \
b_2 \ \ N(1)) \in SN$.  Since $u_3$ reduces to $A[u,1]$ by using
twice by the $\g$ rule, it follows that $A[u,1] \in SN$.
  \item If $H$ is a variable and $\s(H)$ is $\g$-head reducible, then
  $C[\s(t)]=\t(t')$ where $t'$ is the
same as $t$ but where we have given to the variable $H$  the fresh
name $z$ and $\t$ is the same as $\s$ for the variables in
$dom(\s)$ and $\t(z)=C[\s(H)]$. The result follows then from the
IH since $\eta(\tau,t')<\eta(\sigma,t)$.

\end{enumerate}

\item Assume that $\s(t)$ is $\b$-head reducible. We have to show that $Arg[\s(t)]\in SN$ and
 that $B[\s(t)] \in SN$. There are 3 cases to consider.

\begin{enumerate}
  \item If $H$ is an abstraction, the result follows immediately from
the IH since then $Arg[\s(t)]= \s(Arg[t])$ and
$B[\s(t)]=\s(B[t])$.

  \item If $H$ is a variable  and $\s(H)=\l y. v$
for some $v$. Then $Arg[\s(t)]= N(1) \in SN $ by the IH and
$B[\s(t)]=(v[y:=N(1)] \ tail(\overrightarrow{N}))=(z \
tail(\overrightarrow{N}))[z:=v[y:=N(1)]]$. By the claim, $(z \
tail(\overrightarrow{N})) \in SN$. By the IH, $v[y:=N(1)] \in SN$
since  $type(N(1)) < size(A)$.  Finally the IH implies that
$B[\s(t)] \in SN$ since $type(v) < size(A)$.

  \item $H$ is a variable  and $\s(H)=(R
\ \overrightarrow{M'})$ where $R$ is a $\b$-redex. Then
$Arg[\s(t)]=Arg[\s(H)] \in SN$ and $B[\s(t)]=(R' \
\overrightarrow{M'} \ \overrightarrow{N})$ where $R'$ is the
reduct of $R$. But then $B[\s(t)] =\t(t')$  and $t'$ is the same
as $t$ but where we have given to the variable $H$  the fresh name
$z$ and $\t$ is the same as $\s$ for the variables in $dom(\s)$
and $\t(z)=(R' \ \overrightarrow{M'})$. Note that  that $t' \in
SN$ and $\eta(t')\leq \eta(t)$, by Lemma 3.1. We conclude by the
IH since $\eta(\t,t') < \eta(\s,t)$.
\end{enumerate}

\item We, finally, have to show that, for each $i$, $A[\s(t), i] \in
SN$. There are again 3 cases to consider.

\begin{enumerate}

  \item If the $\b$-redex put in head position is some $N(j)$ and
  $M(j)$ was already a redex. Then $A[\s(t), j]=\s(A[t,j])$ and
  the result follows from the IH.

  \item If the $\b$-redex put in head position is some $N(j)$ and
  $M(j)=(x \ a)$ and $\s(x)=\l y. b$ then $A[\s(t), i]= \l y. (\s(H) \
N(1) \ ... \ N(j-1) \ b) \ \s(a) \ N(j+1) \ ... \ N(n))$. Since
$type(\s(a)) < size(A)$ it is enough, by the IH, to show that $\l
y. (\s(H) \ N(1) \ ... \ N(j-1) \ b) \ y \ N(j+1) \ ... \ N(n))$
and so, by Lemma \ref{prepa2}, that $(\s(H) \ N(1) \ ... \ N(j-1)
\ b \ N(j+1) \ ... \ N(n)) \in SN$. Since $type(b) < size(A)$ it
is enough, by the IH, to show $u=(\s(H) \ N(1) \ ... \ N(j-1) \ z
\ N(j+1) \ ... \ N(n)) \in SN$. Let $t'= (H \
\overrightarrow{M'})$ where $\overrightarrow{M'}$ is defined by
$M'(k)=M(k)$, for $k \neq j$, $M'(j)=z$.  Since $t= t'[z:=(x \
a)]$ and $u=\s(t')$ the result follows from Lemma \ref{prepa} and
the IH.

\item If,  finally, $H$ is a variable, $\s(H)=(H' \
\overrightarrow{M'})$ and the $\b$-redex put in head position is
some $M'(j)$. Then, $A[\s(t),j]=\t(A[t',j])$ where $t'$ is the
same as $t$ but where we have given to the variable $H$ the fresh
variable $z$ and $\t$ is the same as $\s$ for the variables in
$dom(\s)$ and $\t(z)=A[\s(H),j]$.  Note that  that $t' \in SN$ and
$\eta(t')\leq \eta(t)$, by Lemma 3.1. We conclude by the IH since
$\eta(\t,t') < \eta(\s,t)$.
\end{enumerate}
\end{enumerate}
\end{proof}

\begin{corollary}\label{cor}
Let $t$ be a typable  term. Then $t$ is strongly normalizing.
\end{corollary}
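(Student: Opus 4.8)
The plan is to deduce Corollary~\ref{cor} from Theorem~\ref{main} by a straightforward structural induction, using the fact that the substitution theorem lets us build up strong normalization compositionally. The corollary states that every typable term is $SN$ (for the full union of rules $\b,\d,\g,assoc$), so I would proceed by induction on the structure of the typable term $t$, or equivalently on $size(t)$, carrying along a type derivation.

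First I would handle the base and easy cases. If $t$ is a variable $x$, then $t\in SN$ trivially since no rule applies to a variable. If $t=\l x.u$, then $u$ is typable (by inverting the abstraction rule, or using a normal derivation), so by the induction hypothesis $u\in SN$; since every reduction of $\l x.u$ happens inside $u$ and the rules $\d,\g,assoc$ do not create a head redex at an outermost abstraction, $t\in SN$ follows immediately. The substantive case is the application $t=(M\ N)$. Here the idea is to write $(M\ N)=\s(t_0)$ for a cleverly chosen smaller term $t_0$ and a substitution $\s\in SN$, and then invoke Theorem~\ref{main}.

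The key step is this: given a typable application $(M\ N)$, I would take the term $t_0=(x\ N)$ where $x$ is a fresh variable, together with the unary substitution $\s$ defined by $\s(x)=M$. Then $\s(t_0)=(M\ N)=t$. Now $t_0=(x\ N)$ is typable and in $SN$: since $N$ is typable, by the IH $N\in SN$, and $(x\ N)$ is $SN$ because its only reductions take place inside $N$ (a variable in head position creates no redex), so $(x\ N)\in SN$. For the substitution, $\s(x)=M$ is typable, hence by the IH $M\in SN$, so $\s\in SN$, and $\s$ is trivially fair since its domain is the single variable $x$. Moreover $x$ can be assigned the type of $M$ in the derivation of $t$, so $\s$ is a fair substitution with $A=type(M)$. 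Theorem~\ref{main} then yields $\s(t_0)=(M\ N)\in SN$, which is exactly $t\in SN$.

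The main obstacle I anticipate is purely bookkeeping on the typing side: to apply Theorem~\ref{main} I must exhibit a genuine type derivation witnessing that $t_0=(x\ N)$ is typable and that $\s(x)=M$ carries a type matching the hole, which requires extracting from the derivation of $(M\ N)$ a type $A\f B$ for $M$ and the type $A$ for $N$, then declaring $x:A\f B$. Because the restricted type system forbids an $\et$ at the right of an $\f$, and because we assume normal derivations, this decomposition is clean and the induction hypothesis applies to the strictly smaller subterms $M$ and $N$. With the typing verified, the whole corollary is just the observation that $(M\ N)=\s((x\ N))$ reduces the problem to the already-proved substitution theorem, so I expect no genuine difficulty beyond this routine derivation-chasing.
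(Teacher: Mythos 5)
Your overall route is the same as the paper's: reduce the application case to Theorem~\ref{main} by writing the term as a substitution instance of a term with a fresh head variable. But there is one step whose justification fails. You claim that $t_0=(x\ N)$ is in $SN$ ``because its only reductions take place inside $N$ (a variable in head position creates no redex).'' This is false for the extended reduction: the $assoc$ rule $(M\ (\l y. N'\ P))\triangleright(\l y.(M\ N')\ P)$ fires at the root of $(x\ N)$ whenever $N$ is itself a $\b$-redex, and the resulting term $(\l y.(x\ N')\ P)$ then has a $\b$-head redex (and possibly $\d$- and $\g$-head redexes), so the reduction graph of $(x\ N)$ is not just that of $N$ with $x$ prepended. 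The conclusion $(x\ N)\in SN$ is still true, but it needs an argument, and the natural one is precisely a second invocation of Theorem~\ref{main}: $(x\ N)=(x\ y)[y:=N]$ where $(x\ y)$ is a normal form (hence trivially $SN$) and $N\in SN$ by the induction hypothesis.

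This is exactly what the paper does: it applies Theorem~\ref{main} twice, first to get $(u\ y)=(x\ y)[x:=u]\in SN$ and then to get $(u\ v)=(u\ y)[y:=v]\in SN$, never relying on a direct syntactic analysis of which redexes a term of the form $(x\ N)$ can have. Once you replace your one-line justification of $(x\ N)\in SN$ by this extra application of the substitution theorem, your proof coincides with the paper's (up to the order in which the two components are substituted, which is immaterial). The typing bookkeeping you flag is indeed routine and is not where the difficulty lies.
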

\begin{proof}
By induction on $size(t).$ If $t$ is an abstraction or a variable
the result is trivial.  Otherwise $t=(u\ v)$ and, by the IH, $u,v
\in SN$. Thus, by Theorem \ref{main}, $(u \ y)=(x\ y)[x:=u] \in
SN$ and, by applying again Theorem \ref{main}, $(u \ v)= (u \
y)[y:=v] \in SN$.
\end{proof}

\end{document}